\begin{document}

\title{Irreducible Function Bases of Isotropic Invariants of a Third Order Three-Dimensional Symmetric and Traceless Tensor}
\author{Yannan Chen\footnote{%
    School of Mathematics and Statistics, Zhengzhou University, Zhengzhou 450001, China ({\tt ynchen@zzu.edu.cn}).
    This author was supported by the National Natural Science Foundation of China (Grant No. 11571178, 11771405)
    and the Hong Kong Polytechnic University Postdoctoral Fellowship.}
\and Shenglong Hu\footnote{Department of Mathematics, School of Science, Hangzhou Dianzi University, Hangzhou 310018, China. ({\tt shenglonghu@hdu.edu.cn}). This author was supported by National Science Foundation of China (Grant No. 11771328).}
\and Liqun Qi\footnote{%
    Department of Applied Mathematics, The Hong Kong Polytechnic University,
    Hung Hom, Kowloon, Hong Kong ({\tt maqilq@polyu.edu.hk}).
    This author was supported by the Hong Kong Research Grant Council
    (Grant No. PolyU  15302114, 15300715, 15301716 and 15300717).}
\and Wennan Zou\footnote{%
    Institute for Advanced Study, Nanchang University, Nanchang 330031, China ({\tt zouwn@ncu.edu.cn}).
    This author was supported by the National Natural Science Foundation of China (Grant No. 11372124)}
    }

\date{\today}
\maketitle

\begin{abstract}  Third order three-dimensional symmetric and traceless tensors play an important role in physics and tensor representation theory.   A minimal integrity basis of a third order three-dimensional symmetric and traceless tensor has four invariants with degrees two, four, six and ten respectively.   In this paper, we show that any minimal integrity basis of a third order three-dimensional symmetric and traceless tensor is also an irreducible function basis of that tensor, and there is no polynomial syzygy relation among the four invariants of that basis, i.e., these four invariants are algebraically independent.

  \textbf{Key words.} minimal integrity basis, irreducible function basis, symmetric and traceless tensor, syzygy.
\end{abstract}

\newtheorem{Theorem}{Theorem}[section]
\newtheorem{Definition}[Theorem]{Definition}
\newtheorem{Lemma}[Theorem]{Lemma}
\newtheorem{Corollary}[Theorem]{Corollary}
\newtheorem{Proposition}[Theorem]{Proposition}
\newtheorem{Conjecture}[Theorem]{Conjecture}
\newtheorem{Question}[Theorem]{Question}

\renewcommand{\hat}[1]{\widehat{#1}}
\renewcommand{\tilde}[1]{\widetilde{#1}}
\renewcommand{\bar}[1]{\overline{#1}}
\newcommand{\REAL}{\mathbb{R}}
\newcommand{\COMPLEX}{\mathbb{C}}
\newcommand{\SPHERE}{\mathbb{S}^2}
\newcommand{\diff}{\,\mathrm{d}}
\newcommand{\st}{\mathrm{s.t.}}
\newcommand{\T}{\top}
\newcommand{\vt}[1]{{\bf #1}}
\newcommand{\aaa}{{\vt{a}}}
\newcommand{\ddd}{{\vt{d}}}
\newcommand{\x}{{\vt{x}}}
\newcommand{\y}{{\vt{y}}}
\newcommand{\z}{{\vt{z}}}
\newcommand{\uu}{{\vt{u}}}
\newcommand{\vv}{{\vt{v}}}
\newcommand{\ww}{{\vt{w}}}
\newcommand{\e}{{\vt{e}}}
\newcommand{\g}{{\vt{g}}}
\newcommand{\0}{{\vt{0}}}
\newcommand{\Ten}{\bf{T}}
\newcommand{\HH}{\mathbb{H}}
\newcommand{\A}{{\bf A}}
\newcommand{\B}{{\bf B}}
\newcommand{\C}{\mathcal{C}}
\newcommand{\D}{{\bf D}}
\newcommand{\E}{\bf{E}}
\newcommand{\OOO}{\mathcal{O}}
\newcommand{\U}{{\bf{U}}}
\newcommand{\V}{{\bf{V}}}
\newcommand{\W}{{\bf{W}}}
\newcommand{\I}{{\bf{I}}}
\newcommand{\II}{{\mathcal{I}}}
\newcommand{\OO}{{\bf{O}}}
\newcommand{\RESULTANT}{\mathrm{Res}}

\newcommand{\supercite}[1]{ \cite{#1}}

\section*{\small Nomenclature}
\footnotesize \setlength{\columnsep}{5mm}
\begin{description}
\item[$\D$] a third order three-dimensional symmetric and traceless tensor with components $D_{ijk}$
\item[$\operatorname{T}(m,n)$] the space of real tensors of order $m$ and dimension $n$
\item[$\operatorname{S}(m,n)$] the subspace of symmetric tensors
\item[$\operatorname{St}(m,n)$] the subspace of symmetric and traceless tensors
\item[$\operatorname{O}(n)$] the orthogonal group of dimension $n$
\item[$\operatorname{SO}(n)$] the special orthogonal group of dimension $n$
\item[$\operatorname{Gl}(n,\mathbb R)$] the general linear group of real matrices
\item[${m \choose n} = \frac{m!}{n!(m-n)!}$] the binomial coefficient for $m\ge n\ge 0$
\end{description}

\newpage
\section{Introduction}
\small
Third order three-dimensional symmetric and traceless tensors play an important role in physics and tensor representation theory.   In the study of liquid crystal, they are used to characterize condensed phases exhibited by bent-core molecules\supercite{CQV-18, GV-16, QCC-18}.   In tensor representation theory, a tensor space is called $\operatorname{O}(3)$-stable if any orthogonal transformation converts that space to itself.  The space of symmetric and traceless tensors of some order is $\operatorname{O}(3)$-stable and does not contain any proper $\operatorname{O}(3)$-stable subspace. Hence, the space of third order three-dimensional symmetric and traceless tensors is a fundamental tensor space.

In 1997, Smith and Bao\supercite{SB-97} presented a minimal integrity basis of a third order symmetric and traceless tensor.   The Smith-Bao minimal integrity basis has four invariants with degrees two, four, six and ten respectively.    It is known that the number of invariants with the same degree in a minimal integrity basis of some tensors is always fixed\supercite{OKA-17}.
Thus, any  minimal integrity basis of a third order symmetric and traceless tensor has four invariants with degrees two, four, six and ten respectively.

In this paper, we show that any minimal integrity basis of a third order three-dimensional symmetric and traceless tensor is also an irreducible function basis of that tensor, and there is no polynomial syzygy relation among the four invariants of that basis, i.e., these four invariants are algebraically independent\supercite{Sh-67}.

In the next section, some preliminaries are given.

In Section 3, we give a proof for the cardinality of a function basis of the invariants for a finite dimensional real vector space by a compact group is bounded below by the intuitive difference of the dimensions of the vector space and the group. Applying this result to the space of third order three-dimensional symmetric and traceless tensors, we show that each minimal integrity basis of a third order three-dimensional symmetric and traceless tensor is also an irreducible function basis of that tensor.

Then, in Section 4, we further show that there is no polynomial syzygy relation among the four invariants of any minimal integrity basis of a third order three-dimensional symmetric and traceless tensor.   In the other words, these four invariants are algebraically independent\supercite{Sh-67}.

The results of this paper enrich the knowledge about minimal integrity bases and irreducible function bases of third order three-dimensional tensors.  In the last section, we review the progresses in this area.


\section{Preliminaries}\label{sec:preliminary}
In this section, we present necessary notions and results from tensor invariant theory and summarize the results about minimal integrity bases of a third order three-dimensional symmetric and traceless tensor.
\subsection{Tensor Invariants}\label{sec:tensor}
Let $m>1$ and $n>1$ be given integers. The space of real tensors $\mathcal A$ of order $m$ and dimension $n$ is formed by all tensors (a.k.a. hypermatrices) with entries $a_{i_1\dots i_m}\in\mathbb R$, the field of real numbers, for all $i_j\in\{1,\dots,n\}$ and $j\in\{1,\dots,m\}$. It is denoted as $\operatorname{T}(m,n)$. Let $\operatorname{Gl}(n,\mathbb R)\subset\mathbb R^{n\times n}$ be the \textit{general linear group} of real matrices. Let $G\subseteq \operatorname{Gl}(n,\mathbb R)$ be a subgroup. We then have a natural group representation $G\rightarrow \operatorname{Gl}(\operatorname{T}(m,n),\mathbb R)$, the real general linear group of the linear space $\operatorname{T}(m,n)$, via
\[
(g\cdot\mathcal T)_{j_1\dots j_m}:=\sum_{i_1}^n\dots\sum_{i_m=1}^ng_{j_1i_1}\dots g_{j_mi_m}t_{i_1\dots i_m}.
\]
A linear subspace $V$ of $\operatorname{T}(m,n)$ is \textit{$G$-stable} if $g\cdot v\in V$ for all $g\in G$ and $v\in V$.

Of particular interests in this article are the compact subgroups $\operatorname{O}(n)$ (the orthogonal group) and $\operatorname{SO}(n)$ (the special orthogonal group), both of which are Lie groups\supercite{Ha-03}.

In $\operatorname{T}(m,n)$, the subspace of symmetric tensors $\operatorname{S}(m,n)$ is $\operatorname{Gl}(n,\mathbb R)$-stable, and thus $G$-stable for every subgroup $G$. Likewise, inside $\operatorname{S}(m,n)$, the subspace of symmetric and traceless tensors $\operatorname{St}(m,n)$ is $\operatorname{O}(n)$-stable, thus $\operatorname{SO}(n)$-stable. Recall that a symmetric tensor $\mathcal T\in \operatorname{S}(m,n)$ is traceless if
\[
\sum_{i=1}^nt_{iii_3\dots i_m}=0\ \text{for all }i_3,\dots,i_m\in\{1,\dots,n\}.
\]
A well-known fact is that the dimension of $\operatorname{S}(m,n)$ as a linear space is ${n+m-1\choose n-1}$, and that of $\operatorname{St}(m,n)$ is ${n+m-1\choose n-1}-{n+m-3\choose n-1}$.

Associated to a linear subspace $V\subseteq\operatorname{T}(m,n)$ is an algebra $\mathbb R[V]$, generated by the dual basis of $V$. Once a basis of $V$ is fixed, an element $f\in \mathbb R[V]$ can be viewed as a polynomial in terms of the coefficients of $v\in V$ in that basis. Let $G\subseteq \operatorname{Gl}(n,\mathbb R)$ be a subgroup and $V$ be $G$-stable. Then, we can induce a group action of $G$ on $\mathbb R[V]$ via
\[
(g\cdot f)(v)=f(g^{-1}\cdot v)\ \text{for all }g\in G\ \text{and }v\in V.
\]
With this group action, some elements of $\mathbb R[V]$ are fixed points for the whole $G$, i.e.,
\[
g\cdot f=f\ \text{for all }g\in G,
\]
which form a subring $\mathbb R[V]^G$ of $\mathbb R[V]$\supercite{Ol-99,We-39}. Elements of $\mathbb R[V]^G$ are \textit{invaraints} of $V$ under the action of $G$. It is well-known that $\mathbb R[V]^G$ is finitely generated. A generator set is called an \textit{integrity basis}.  In an integrity basis, if none of the generators is a polynomial of the others, it is a \textit{minimal integrity basis}. Given a subspace $V$ and group $G$, minimal integrity bases may not be unique, but their cardinalities are the same as well as the lists of degrees of the generators \supercite{Spe-71}. Invariants in $\mathbb R[V]^G$ are polynomials, always referred as \textit{algebraic invaraints}.

Likewise, one can consider \textit{function invariants}\supercite{Ol-99}. A function $f\colon V\rightarrow\mathbb R$ is an invariant if
\[
f(v)=f(g\cdot v)\ \text{for all }g\in G.
\]
The set of function invariants of $V$ is denoted as $\mathcal I(V)$.
If there is a set of generators such that each function invariant can be expressed as a function of the generators, it is called a \textit{function basis}. Similarly, if none of the generators is a function of the others in a function basis, it is called an \textit{irreducible function basis}.

\subsection{Minimal Integrity Bases of a Third Order Three-Dimensional Symmetric and Traceless Tensor}\label{sec:integrity}

Use $\D$ to denote a third order three-dimensional symmetric and traceless tensor. From now on, the summation convention is used, i.e., in a product, if an index is repeated twice, then it is summed up from $1$ to $3$ for that index.

In 1997, Smith and Bao\supercite{SB-97} presented a minimal integrity basis for $\D$ as $\{I_2, I_4, I_6, I_{10} \}$, with
\begin{equation*}
\begin{array}{lll}
  I_2 := D_{ijk}D_{ijk}, & I_4 := D_{ijk}D_{ij\ell}D_{pqk}D_{pq\ell}, \\
  I_6 := v_i^2, & I_{10} := D_{ijk}v_iv_jv_k,
\end{array}
\end{equation*}
where $v_p:=D_{ijk}D_{ij\ell}D_{k\ell p}$.

The number of invariants with the same degree in a minimal integrity basis of some tensors is always fixed\supercite{OKA-17}.
Hence, any  minimal integrity basis of $\D$ has four invariants with degrees two, four, six and ten respectively.
We denote the four invariants of a general  minimal integrity basis of $\D$ by $J_2, J_4, J_6$ and $J_{10}$, respectively.

\section{Irreducible Function Bases of A Third Order Symmetric and Traceless Tensor}

In this section, we show that the cardinality of a function basis of the invariants for a finite dimensional real vector space by a compact group is lower bounded by the intuitive difference of the dimensions of the vector space and the group.  Then we apply this result to the space of third order three dimensional symmetric and traceless tensors, showing that each minimal integrity basis of a third order three-dimensional symmetric and traceless tensor is also an irreducible function basis of that tensor.

\subsection{Quotient Manifold by Lie Groups}\label{sec:quotient}
A real vector space $V$ of finite dimension has a natural manifold structure. Any given equivalence relation $\sim$ on $V$ defines a quotient structure with elements being the \textit{equivalence classes}
\[
V/\sim:=\{[v]\mid v\in V\}\ \text{with }[v]:=\{u\in V\mid v\sim u\}.
\]
The set $V/\sim$ is the \textit{quotient} of $V$ by $\sim$, and $V$ is the \textit{total space} of $V/\sim$.
The quotient $V/\sim$ is a \textit{quotient manifold} if the natural projection $\pi : V\rightarrow V/\sim$ is a submersion.
$V/\sim$ admits at most one manifold structure making it being a quotient manifold$^{\text{\cite[Proposition~3.4.1]{AMS-08}}}$. It may happen that $V/\sim$ has a manifold structure but fails to be a quotient manifold. Whenever $V/\sim$ is indeed a quotient manifold, we call the equivalence relation $\sim$ \textit{regular}.

Let $G$ be any compact Lie group and $V$ a finite dimensional real linear space. Suppose that $V$ is a representation of $G$, i.e., there is a group homomorphism $G\rightarrow\operatorname{Gl}(V,\mathbb R)$. Then, there is a natural equivalence relation given by $G$ as
\[
v\sim u\ \text{if and only if }g\cdot v=u\ \text{for some }g\in G. 
\]
The quotient under this equivalence is sometimes denoted as $V/G$, which is the set of orbits of the group action of $G$ on $V$. Suppose in the following that the group action is continuous.
Then, with the compactness of $G$, it can be shown that $V/G$ is a quotient smooth manifold, since the graph set
\[
\{(v,u)\mid [v]=[u]\}\subset V\times V
\]
is closed$^{\text{\cite[Proposition~3.4.2]{AMS-08}}}$.

Note that the fibre of the natural projection $\pi$ is the equivalence class $\pi^{-1}(\pi(v))=[v]$ for each $v\in V$. If $[v]$ is not a discrete set of points for some $v\in V$, then the dimension of $V/\sim$ is strictly smaller than the dimension of $V$$^{\text{\cite[Proposition~3.4.4]{AMS-08}}}$.

In the following, we consider subspaces of  the linear space of tensors of order $m$ and dimension $n$, i.e., $V\subseteq \operatorname{T}(m,n)$.
\begin{Lemma}\label{lem:dimension}
Let $V\subseteq\operatorname{T}(m,n)$ be a linear space containing $\operatorname{St}(m,n)$ and $G=\operatorname{O}(n)$ or $\operatorname{SO(n)}$. Then, we have $\operatorname{dim}(V/G)<\operatorname{dim}(V)$, and
\begin{equation}\label{eq:dim}
\operatorname{dim}(V/G)\geq \operatorname{dim}{V}-\operatorname{dim}(G).
\end{equation}
\end{Lemma}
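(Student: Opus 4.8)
The plan is to read off both inequalities from the quotient--manifold structure of Section~\ref{sec:quotient}. The $G$--action on $V$ is the restriction of the polynomial representation $g\cdot v$, hence smooth, and with $G$ compact the cited results make $V/G$ a quotient manifold whose projection $\pi\colon V\to V/G$ is a submersion with fibres equal to the orbits $\pi^{-1}(\pi(v))=[v]=G\cdot v$. The submersion relation $\operatorname{dim}(V)=\operatorname{dim}(V/G)+\operatorname{dim}[v]$ then reduces everything to two facts about the fibre dimension: that it is strictly positive (giving $\operatorname{dim}(V/G)<\operatorname{dim}(V)$) and that it is at most $\operatorname{dim}(G)$ (giving \eqref{eq:dim}). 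Note that $\operatorname{dim}\operatorname{O}(n)=\operatorname{dim}\operatorname{SO}(n)=\binom{n}{2}$, so the two cases $G=\operatorname{O}(n)$ and $G=\operatorname{SO}(n)$ can be treated uniformly.

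For the strict inequality, by \cite[Proposition~3.4.4]{AMS-08} it is enough to exhibit one $v$ whose orbit is non-discrete. Since $\operatorname{St}(m,n)\subseteq V$ and $m,n>1$, the connected group $\operatorname{SO}(n)\subseteq G$, of positive dimension $\binom{n}{2}$, acts nontrivially on $\operatorname{St}(m,n)$; hence some nonzero $v\in\operatorname{St}(m,n)$ satisfies $g\cdot v\ne v$ for some $g\in\operatorname{SO}(n)$. Its $\operatorname{SO}(n)$--orbit is the continuous image of the connected set $\operatorname{SO}(n)$ and contains more than one point, so it is a connected subset of $V$ with more than one point and therefore not discrete. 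As $[v]\supseteq\operatorname{SO}(n)\cdot v$, the orbit $[v]$ is non-discrete, and the cited proposition gives $\operatorname{dim}(V/G)<\operatorname{dim}(V)$.

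For the lower bound I would bound the fibre dimension from above. For any $v$, the orbit map $g\mapsto g\cdot v$ is smooth and induces a diffeomorphism $G/\operatorname{Stab}(v)\cong[v]$, whence $\operatorname{dim}[v]=\operatorname{dim}(G)-\operatorname{dim}\operatorname{Stab}(v)\le\operatorname{dim}(G)$. Substituting into the submersion relation gives
\[
\operatorname{dim}(V/G)=\operatorname{dim}(V)-\operatorname{dim}[v]\ge\operatorname{dim}(V)-\operatorname{dim}(G),
\]
which is \eqref{eq:dim}.

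The point requiring care is that the fibre dimension in the submersion relation is not literally constant over all of $V$: orbit dimensions jump (the zero tensor has a single-point orbit, generic tensors do not), so $\pi$ is a bona fide submersion with fibres of one fixed dimension $d$ only after restricting to the open dense principal stratum, and $\operatorname{dim}(V/G)=\operatorname{dim}(V)-d$ must be read off there. The main obstacle is thus to confirm that this common principal-orbit dimension $d$ is the quantity appearing in both estimates---that the $v$ produced in the strict-inequality step may be taken principal (so $d>0$) and that the bound $\operatorname{dim}[v]\le\operatorname{dim}(G)$ applies to it (so $d\le\operatorname{dim}(G)$). Once $d$ is pinned down in this way, both conclusions follow as direct dimension counts.
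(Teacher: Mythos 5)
Your proof follows essentially the same route as the paper's: both invoke \cite[Proposition~3.4.4]{AMS-08} to reduce the claim to exhibiting a non-discrete orbit (via the connectedness of $\operatorname{SO}(n)$) and to bounding the orbit dimension by $\operatorname{dim}(G)$ through the orbit--stabilizer relation. Your version is, if anything, slightly cleaner in deducing non-discreteness directly from the connected image of $\operatorname{SO}(n)$ rather than the paper's curve-based contradiction, and your closing caveat about non-constant orbit dimensions flags a genuine subtlety that the paper itself leaves unaddressed.
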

\begin{proof}
By Proposition~3.4.4 in book \supercite{AMS-08}, if there is one point $v\in V$ such that $[v]$ is not a set of discrete points, then $\operatorname{dim}(V/G)<\operatorname{dim}(V)$, and $\operatorname{dim}(V/G)=\operatorname{dim}{V}-\operatorname{dim}([v])$, where $[v]$ is regarded as an embedded submanifold of $V$.

Note that $[v]$ is the orbit of $G$ acting on the element $v$. Thus, the dimension of $[v]$ cannot exceed the dimension of $G$. Consequently, the dimension bound \eqref{eq:dim} follows if we can find a point $v\in V$ such that $[v]$ is not a discrete set of points.

First of all, we show that $[v]$ cannot be a discrete set of points for the group $G=\operatorname{SO(n)}$ for some $v\in V$.

It is easy to see that the stabilizers $G_v=G$ cannot hold through out $v\in V$. Thus, there exists an orbit $[v]$ with more than one element. Suppose that $[v]$ is a discrete set of more than two points.
For any given two discrete points $v_1,v_2\in [v]$, there exist $g_1,g_2\in G$ such that
\[
v_i=g_i\cdot v\ \text{for all }i=1,2
\]
by the definition of $[v]$. Since $\operatorname{SO(n)}$ is a connected manifold, there is a smooth curve $g(t)$ starting from $g(0)=g_1$ ending at $g(1)=g_2$. By the definition,
\[
g(t)\cdot v\in [v]\ \text{for all }t\in [0,1].
\]
Since the group action is smooth, we see that $v_1$ and $v_2$ is thus connected, contradicting the discreteness.

Since $\operatorname{SO(n)}$ is one half connected component of  $\operatorname{O(n)}$, the result for  $\operatorname{O(n)}$ follows immediately.
\end{proof}


\subsection{Cardinality of Function Basis}\label{sec:function}

The next result is Theorem~11.112 in book\supercite{Vi-03}, see also the classical book \supercite{We-39}.
\begin{Lemma}[Separability]\label{lem:sep}
Let $G$ be a compact group and $V$ a real vector space representing $G$. Then the orbits of $G$ acting on $V$ are separated by the invariants $\mathbb R\mathbb [V]^G$.
\end{Lemma}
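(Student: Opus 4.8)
The plan is to show that any two points lying in distinct orbits can be told apart by a single polynomial invariant, which is exactly what separation of orbits means. So I would fix $v,u\in V$ with $[v]\neq[u]$. Because $G$ is compact and acts continuously, each orbit $G\cdot v$ and $G\cdot u$ is the continuous image of the compact group $G$, hence compact, and therefore closed in $V$. As distinct orbits are disjoint, I am then faced with separating two disjoint compact subsets of $V$.

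First I would produce a \emph{continuous} invariant that separates them, and only afterwards upgrade it to a polynomial one. By Urysohn's lemma (or simply by using the distance function to one of the orbits) there is a continuous $h\colon V\to\mathbb R$ with $h\equiv 0$ on $G\cdot v$ and $h\equiv 1$ on $G\cdot u$. Averaging over the normalized Haar measure $\mu$ of the compact group $G$,
\[
\tilde h(x):=\int_G h(g\cdot x)\,d\mu(g),
\]
yields a continuous $G$-invariant function with $\tilde h(v)=0$ and $\tilde h(u)=1$, since $h$ is already constant on each of the two orbits.

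To obtain a genuine \emph{polynomial} invariant I would first fix a $G$-invariant inner product on $V$ — one exists by averaging an arbitrary inner product over $\mu$, again using compactness — so that $G$ acts by isometries and every closed ball centred at the origin is $G$-stable. On a large enough invariant closed ball $B$ containing $v$ and $u$, the Stone--Weierstrass theorem gives a polynomial $p$ with $\sup_{x\in B}|p(x)-h(x)|<\varepsilon$. Averaging again,
\[
\tilde p(x):=\int_G p(g\cdot x)\,d\mu(g),
\]
produces a $G$-invariant function; because $G$ acts linearly, the coordinates of $g\cdot x$ are linear in $x$, so $p(g\cdot x)$ is a polynomial in $x$ of degree at most $\deg p$ whose coefficients depend continuously on $g$, and integrating over $G$ keeps the result inside the finite-dimensional space of polynomials of degree $\le\deg p$. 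Hence $\tilde p\in\mathbb R[V]^G$. Since $B$ is $G$-invariant, for $x\in B$ one has $|\tilde p(x)-\tilde h(x)|\le\int_G|p-h|(g\cdot x)\,d\mu(g)\le\varepsilon$, so for $\varepsilon<\tfrac13$ the values $\tilde p(v)$ and $\tilde p(u)$ sit within $\varepsilon$ of $0$ and $1$ respectively and are therefore distinct. Thus $\tilde p$ is the sought polynomial invariant separating $[v]$ and $[u]$.

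The step I expect to require the most care is the passage from the continuous separating function to a polynomial one: I must ensure simultaneously that Haar averaging does not destroy the polynomial nature of $p$ (handled by working inside the $G$-invariant finite-dimensional space of bounded-degree polynomials) and that the uniform approximation survives the averaging (handled by carrying out the approximation over a $G$-\emph{invariant} compact set, which is why the invariant inner product is introduced first). Everything else — compactness and closedness of the orbits, and the existence of both the Haar measure and an invariant inner product — is a standard consequence of the compactness of $G$.
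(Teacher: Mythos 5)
Your argument is correct and complete. Note, however, that the paper does not prove this lemma at all: it simply cites it as Theorem~11.112 of Vinberg's \emph{A Course in Algebra} (and points to Weyl's classical book), so there is no in-paper proof to compare against. What you have written is essentially the standard textbook proof underlying that citation: orbits of a compact group acting continuously on a finite-dimensional real space are compact, hence closed and separable by a continuous function; Haar averaging turns that function into a continuous invariant; Stone--Weierstrass on a $G$-stable ball (obtained from an averaged, hence invariant, inner product) plus a second Haar average produces a polynomial invariant that still separates the two orbits. The two delicate points you flag --- that averaging $p(g\cdot x)$ over $G$ stays inside the finite-dimensional space of polynomials of degree at most $\deg p$ because $G$ acts linearly, and that the uniform estimate must be carried out on a $G$-invariant compact set so it survives the averaging --- are exactly the places where such a proof could go wrong, and you handle both correctly. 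Your self-contained argument buys transparency (and makes clear where compactness and the reality of the field enter, explaining the paper's remark that the conclusion may fail over $\COMPLEX$), at the cost of a page of analysis the paper avoids by citation.
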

The conclusion may fail in the complex case.

The concepts of function invariants and functional independence of invariants can be found in classical textbooks, see for example$^{\text{\cite[Page 73]{Ol-99}}}$.

The analysis for integrity and minimal integrity bases of $V$ for some $G$ is more sophisticated and approachable than function basis. Nevertheless, an exciting fact that an integrity basis is also a function basis holds in most interesting cases.
We will present this result in Theorem~\ref{thm:function}.
\begin{Theorem}[Function Basis]\label{thm:function}
Let $G$ be a compact group and $V$ a finite dimensional real linear vector space representing $G$. Then, any integrity basis of $\mathbb R[V]^G$ is a function basis.
\end{Theorem}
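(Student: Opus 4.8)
The plan is to reduce the statement to the Separability Lemma (Lemma~\ref{lem:sep}). Given an integrity basis $\{p_1,\dots,p_k\}$ of $\mathbb{R}[V]^G$, I would form the map $P\colon V\to\mathbb{R}^k$, $P(v):=(p_1(v),\dots,p_k(v))$, and aim to show that its fibres are exactly the orbits of $G$ acting on $V$. The forward inclusion is routine: each $p_i$ is $G$-invariant, so $w=g\cdot v$ for some $g\in G$ immediately gives $P(v)=P(w)$.

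The substantive direction is the converse. Here I would use that $\{p_1,\dots,p_k\}$ generates $\mathbb{R}[V]^G$ as a ring: if $P(v)=P(w)$, then for every polynomial invariant $q=Q(p_1,\dots,p_k)$ one obtains $q(v)=q(w)$, so $v$ and $w$ agree on all of $\mathbb{R}[V]^G$. The Separability Lemma then forces $v$ and $w$ into the same orbit. This is the step carrying all the content, and it is exactly where the compactness of $G$ enters (through Lemma~\ref{lem:sep}); the analogous statement fails over $\mathbb{C}$, as noted after that lemma.

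With the fibres of $P$ identified as the orbits, the conclusion is formal. Any function invariant $f\colon V\to\mathbb{R}$ is by definition constant on orbits, hence constant on the fibres of $P$, so it descends to a well-defined function $h\colon P(V)\to\mathbb{R}$ by $h(P(v)):=f(v)$, yielding $f=h\circ P$. Thus $f$ is a function of $p_1,\dots,p_k$, and the integrity basis is a function basis. I do not expect any genuine obstacle beyond invoking Lemma~\ref{lem:sep}: in particular, no continuity or smoothness of $h$ is required, since the notion of function basis used here asks only for expressibility as a set-theoretic function of the generators.
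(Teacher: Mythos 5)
Your proposal is correct and follows essentially the same route as the paper's own proof: both identify the fibres of the evaluation map $v\mapsto(p_1(v),\dots,p_r(v))$ with the $G$-orbits via the Separability Lemma (using that the $p_i$ generate $\mathbb R[V]^G$), and then factor any function invariant through this map. No substantive difference.
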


\begin{proof}It is well-known that the ring of polynomial invariants $\mathbb R[V]^{G}$ is finitely generated, whose minimal set of generators is an integrity basis.

The orbits of $G$ on $V$ are separable, i.e., $p(u)=p(v)$ for all $p\in \mathbb R[V]^{G}$ if and only if $u=g\cdot v$ for some $g\in G$ by Lemma~\ref{lem:sep}.  Let $\mathcal P:=\{p_1,\dots,p_r\}$ be an integrity basis. We have a map
\[
\mathbb P: V\rightarrow \mathbb P(V)\ \text{with }v\mapsto (p_1(v),\dots,p_r(v))^\mathsf{T},
\]
where $\mathbb P(V)$ is the image of $\mathbb P$ on $V$.
Actually, this map is defined over $V/G$, as each $p_i\in \mathcal P$ is an invariant. Moreover, this map, with $V/G\rightarrow\mathbb P(V)$, is onto and one to one, following from the separability of $\mathbb R[V]^{G}$ on $V$ and the fact that each algebraic invariant is generated by $p_1,\dots,p_r$. Thus, there is an inverse map
\[
\mathbb P^{-1}: \mathbb P(V)\rightarrow V/G.
\]
In summary,
we can conclude that $[v]$ (the equivalent class in $V/G$) for any $v\in V$ can be determined by the values of $p_1(v),\dots,p_r(v)$.
On the other side, each invariant in $\mathcal I(V)$, the set of invariants of $V$, is a function over $V/G$. Thus, we have a chain of functions
\[
V\rightarrow\mathbb P(V)\leftrightarrow V/G\rightarrow \mathbb R.
\]
Reading throughout the above chain, we get that the integrity basis $\mathcal P$ gives a function basis for $\mathcal I(V)$.
\end{proof}

When conditions in Theorem~\ref{thm:function} are fulfilled, we can derive a function basis and even an irreducible function basis from an integrity basis or minimal integrity basis. A function basis derived from an integrity basis is called a \textit{polynomial function basis}, and an irreducible function basis derived from a minimal integrity basis is called an \textit{irreducible polynomial function basis}. Note that any function basis consisting of polynomial invariants is a polynomial function basis as it can always be expanded to an integrity basis.  In the following, we will give a lower bound for the cardinality of a polynomial function basis.

Since $\mathbb R[V]^G$ is finitely generated\supercite{Ha-03} and has no nilpotent elements, it follows from$^{\text{\cite[Theorem~1.3]{Sh-77}}}$ that
that $V/G$ is a (quotient) variety. It is the variety determined by the coordinate ring $\mathbb R[V]/(\mathbb R[V]^G)$.

\begin{Theorem}[The Cardinality Theorem]\label{thm:card}
Let $G$ be a compact group of dimension $d$ and $V$ a finite dimensional real linear vector space representing $G$ of dimension $N>d$. Then, any polynomial function basis has cardinality being not smaller than $N-d$.
\end{Theorem}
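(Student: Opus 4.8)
The plan is to read the cardinality bound off the fibres of the evaluation map attached to the basis, exactly the map that already appears in the proof of Theorem~\ref{thm:function}. Given a polynomial function basis $\mathcal P=\{p_1,\dots,p_r\}$, I would form
\[
\mathbb P\colon V\rightarrow\mathbb R^r,\qquad v\mapsto\bigl(p_1(v),\dots,p_r(v)\bigr)^\T,
\]
and study its image $\mathbb P(V)\subseteq\mathbb R^r$. The point is that the cardinality $r$ enters only through the ambient target: since $\mathbb P(V)$ lies in $\mathbb R^r$, its (semialgebraic) dimension automatically satisfies $\operatorname{dim}\mathbb P(V)\le r$. The entire argument then reduces to establishing the reverse-type bound $\operatorname{dim}\mathbb P(V)\ge N-d$, after which $r\ge N-d$ is immediate.

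The first step is to identify the fibres of $\mathbb P$ with the orbits of $G$. Because $\mathcal P$ is a \emph{function} basis, every polynomial invariant---in particular every member of a fixed integrity basis of $\mathbb R[V]^G$---is a function of $p_1,\dots,p_r$. Hence $\mathbb P(u)=\mathbb P(v)$ forces every element of $\mathbb R[V]^G$ to take the same value at $u$ and $v$, and by the Separability Lemma~\ref{lem:sep} this is equivalent to $u$ and $v$ lying in one common orbit. The converse inclusion is clear since each $p_i$ is an invariant, so that $\mathbb P^{-1}(\mathbb P(v))=[v]$ for every $v\in V$; the fibres of $\mathbb P$ are precisely the $G$-orbits.

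The second step is a fibre-dimension count. Let $e$ denote the dimension of a principal (generic) orbit; since each orbit $[v]$ is the image of the smooth orbit map $G\rightarrow V,\ g\mapsto g\cdot v$, it is diffeomorphic to $G/G_v$ and hence $e\le\operatorname{dim}G=d$. Applying the fibre-dimension theorem to the dominant semialgebraic map $V\rightarrow\mathbb P(V)$ gives $\operatorname{dim}\mathbb P(V)=N-e\ge N-d$, because the generic fibre is a principal orbit. Together with $\operatorname{dim}\mathbb P(V)\le r$ this yields $N-d\le\operatorname{dim}\mathbb P(V)\le r$, which is the assertion. As a consistency check, this matches the lower bound $\operatorname{dim}(V/G)\ge N-d$ produced by the argument of Lemma~\ref{lem:dimension}, the induced bijection $V/G\rightarrow\mathbb P(V)$ merely realizing the quotient as the semialgebraic set $\mathbb P(V)$.

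The main obstacle is the dimension bookkeeping over $\mathbb R$ rather than $\mathbb C$. The image $\mathbb P(V)$ of a real polynomial map is in general only a semialgebraic set, not an affine variety, so I cannot simply invoke the complex fibre-dimension theorem; instead I would use its semialgebraic counterpart, namely that a dominant semialgebraic map admits a dense semialgebraic open subset of the target over which all fibres have dimension $\operatorname{dim}V-\operatorname{dim}\mathbb P(V)$, together with the fact that the generic fibre coincides with a principal orbit. A secondary point to verify is that the orbit-dimension input of Lemma~\ref{lem:dimension}, stated there for $V\supseteq\operatorname{St}(m,n)$ and $G=\operatorname{O}(n),\operatorname{SO}(n)$, extends verbatim to an arbitrary compact Lie group: the only feature used is that every orbit is a continuous image of $G$ and hence has dimension at most $d$, so that $\operatorname{dim}(V/G)=N-e\ge N-d$ holds with no tensorial hypothesis.
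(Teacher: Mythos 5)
Your argument is correct, and it rests on the same two pillars as the paper's proof: separability of orbits by polynomial invariants (Lemma~\ref{lem:sep}) and a fibre-dimension count for the evaluation map $v\mapsto(p_1(v),\dots,p_r(v))$. The difference is where the count is performed. The paper first passes to the quotient $V/G$, treats it as a (real) variety of dimension at least $N-d$ via Lemma~\ref{lem:dimension}, and argues by contradiction: if $r<N-d$, the induced injective map $V/G\rightarrow\overline{\mathcal P(V/G)}$ would have positive-dimensional generic fibres, contradicting separability. You instead stay on $V$: the fibres of $\mathbb P$ are exactly the $G$-orbits, each of dimension at most $d$, so the semialgebraic fibre-dimension theorem gives $\operatorname{dim}\mathbb P(V)\geq N-d$ directly, and $\mathbb P(V)\subseteq\mathbb R^r$ forces $r\geq N-d$ with no contradiction argument. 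This buys you two things. First, you never need $V/G$ to carry a variety or manifold structure, a delicate point over $\mathbb R$ that the paper handles only by citation; the image $\mathbb P(V)$ is automatically semialgebraic by Tarski--Seidenberg and serves as a concrete model of the quotient. Second, you correctly flag that the paper's appeal to the bound $\operatorname{dim}(V/G)\geq N-d$ relies on Lemma~\ref{lem:dimension}, which is stated only for $V\supseteq\operatorname{St}(m,n)$ and $G=\operatorname{O}(n)$ or $\operatorname{SO}(n)$, whereas the theorem is asserted for an arbitrary compact group; your observation that the only input needed is $\operatorname{dim}[v]\leq\operatorname{dim}G$ repairs that scope mismatch. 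One small simplification: you do not actually need the principal-orbit refinement $\operatorname{dim}\mathbb P(V)=N-e$; the inequality $\operatorname{dim}V\leq\operatorname{dim}\mathbb P(V)+\max_v\operatorname{dim}[v]$, valid for any semialgebraic map whose fibres are the orbits, already yields $\operatorname{dim}\mathbb P(V)\geq N-d$.
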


\begin{proof}
Let $\{p_1,\dots,p_r\}\subset P[V]^{G}$ be a polynomial function basis. We must have that for each pair $u,v\in V$
\[
p_i(u)=p_i(v)\ \text{for all }i\in\{1,\dots,r\}
\]
will implies
\[
[u]=[v],
\]
since each polynomial in $P[V]^{G}$ is a function of $p_1,\dots,p_r$, and $P[V]^G$ separates the orbits of $V/G$ \supercite{We-39}.

We therefore have that the mapping
\[
\mathcal P : V/G\rightarrow \mathbb R^r
\]
given by
\[
\mathcal P([v])=(p_1(v),\dots,p_r(v))^\mathsf{T}
\]
is a one to one regular map. Obviously, we can consider the mapping
\[
\mathcal P : V/G\rightarrow \overline{\mathcal P(V/G)}\subseteq\mathbb R^r
\]
whenever $\mathcal P$ is not dominant. Now, the map
\[
\mathcal P : V/G\rightarrow \overline{\mathcal P(V/G)}
\]
is a dominant morphism. Then, if $r<N-d\leq \operatorname{dim}(V/G)$, each fibre of $\mathcal P^{-1}(\mathbf y)$ for $\mathbf y\in \mathcal P(V/G)$ will have dimension at least $\operatorname{dim}(V/G)-\operatorname{dim}(\overline{\mathcal P(V/G)})\geq N-d-r\geq 1$ $^{\text{\cite[Proposition~6.3]{b}}}$. This contradicts the separability of the set $\{p_1,\dots,p_r\}$ on the orbits of $V/G$ immediately.
\end{proof}

\subsection{Irreducible Function Bases of A Third Order Symmetric and Traceless Tensor}\label{sec:app}
By the cardinality theorem for function basis, we have the following result for third order three-dimensional symmetric and traceless tensors.

\begin{Theorem}\label{prop:33}
Every minimal integrity basis of isotropic invariants of a third order three-dimensional symmetric and traceless tensor $\D$ is an irreducible function basis of that tensor.
\end{Theorem}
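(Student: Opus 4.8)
The plan is to specialize the two theorems just proved to the case $V=\operatorname{St}(3,3)$ and $G=\operatorname{O}(3)$, and then run a short deletion argument. First I would record the two dimensions that drive everything. Using the formula for $\dim\operatorname{St}(m,n)$ recalled in Section~\ref{sec:tensor}, the total space has dimension
\[
N=\dim\operatorname{St}(3,3)=\binom{5}{2}-\binom{3}{2}=10-3=7,
\]
while the orthogonal group has dimension $d=\dim\operatorname{O}(3)=3$. Hence $N-d=4$, which is exactly the number of invariants in any minimal integrity basis of $\D$, as recorded in Section~\ref{sec:integrity}.

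Since $\operatorname{O}(3)$ is compact and $\operatorname{St}(3,3)$ is a finite dimensional real representation with $N>d$, both Theorem~\ref{thm:function} and Theorem~\ref{thm:card} apply. By Theorem~\ref{thm:function}, any minimal integrity basis $\{J_2,J_4,J_6,J_{10}\}$ of $\D$ is a function basis; being composed of polynomial invariants, it is in fact a polynomial function basis (as noted after Theorem~\ref{thm:function}, such a set always expands to an integrity basis). The Cardinality Theorem then supplies the crucial lower bound: every polynomial function basis for these invariants has cardinality at least $N-d=4$.

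To obtain irreducibility I would argue by contradiction. Suppose one of the four invariants, say $J_{d_i}$, were a function of the remaining three. Then any function invariant, being expressible through all four generators, would already be expressible through the other three; that is, the three surviving polynomial invariants would themselves constitute a function basis, hence a polynomial function basis. But that would be a polynomial function basis of cardinality $3<4=N-d$, contradicting Theorem~\ref{thm:card}. Therefore no generator is a function of the others, and the minimal integrity basis is an irreducible function basis.

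The proof is essentially a clean assembly of the two preceding theorems, so no single step is genuinely difficult; the one point deserving care is the deletion step, where one must verify that removing a redundant generator preserves the function-basis property and that the reduced three-element set still qualifies as a \emph{polynomial} function basis, so that Theorem~\ref{thm:card} is legitimately applicable and the cardinality contradiction is valid.
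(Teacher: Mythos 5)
Your proposal is correct and follows essentially the same route as the paper: both compute $\dim\operatorname{St}(3,3)-\dim\operatorname{O}(3)=7-3=4$, invoke Theorem~\ref{thm:function} to get that the minimal integrity basis is a (polynomial) function basis, and use the lower bound of Theorem~\ref{thm:card} against the known cardinality $4$ of any minimal integrity basis. Your explicit deletion argument just spells out the final step that the paper states more tersely as ``the lower bound equals the upper bound.''
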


\begin{proof}
First note that the dimension of $\operatorname{St}(3,3)$ is $7$. Thus, the dimension of $\operatorname{St}(3,3)/\operatorname{O}(3)$ is at least $4$.  It follows from Theorem~\ref{thm:card} that an irreducible function basis will have cardinality at least $4$.

On the other hand, every minimal integrity basis of $\operatorname{St}(3,3)$ will have the same cardinality $4$ \supercite{Spe-71}, which is of course an upper bound for the cardinality of irreducible function bases derived from them.

As the lower bound is equal to the upper bound for the cardinality of the irreducible function basis, the conclusion follows.
\end{proof}

\textbf{Remark.} We may directly show that the Smith-Bao minimal integrity basis $\{ I_2, I_4, I_6, I_{10}\}$ is an irreducible function basis of a third order three-dimensional symmetric and traceless tensor $\D$ by using the method proposed in\supercite{PT-87}. Since a minimal integrity basis is also a function basis, we only need to prove that none of $\{ I_2, I_4, I_6, I_{10} \}$ is a single-valued function of the others.

Using seven independent elements of the tensor $\D$:
\begin{equation*}
  D_{111}, D_{112}, D_{113}, D_{122}, D_{123}, D_{222}, \text{ and } D_{223},
\end{equation*}
we represent the multi-way array corresponding to $\D$ as
\begin{equation*}\footnotesize
  \left(\begin{array}{ccc|ccc|ccc}
   D_{111} & D_{112} & D_{113}          & D_{112} & D_{122} & D_{123}          & D_{113} & D_{123} & -D_{111}-D_{122} \\
    D_{112} & D_{122} & D_{123}          & D_{122} & D_{222} & D_{223}          & D_{123} & D_{223} & -D_{112}-D_{222} \\
    D_{113} & D_{123} & -D_{111}-D_{122} & D_{123} & D_{223} & -D_{112}-D_{222} & -D_{111}-D_{122} & -D_{112}-D_{222} & -D_{113}-D_{223} \\
  \end{array}\right).
\end{equation*}

Let $D_{111} = \sqrt[4]{3}$, $D_{112} = D_{113} = D_{122} = D_{123} = D_{222} = D_{223} = 0$.  Then $I_2 = 4\sqrt{3}, I_4 = 24, I_6 = I_{10} = 0$.
Let $D_{112} =  \sqrt[4]{2}$, $D_{111} = D_{113} = D_{122} = D_{123} = D_{222} = D_{223} = 0$.  Then $I_2 = 6\sqrt{2}, I_4 = 24, I_6 = I_{10} = 0$.  We see that with respect to these two examples, the values of $I_4, I_6$ and $I_{10}$ keep invariant, but the value of $I_2$ is changed.   This shows that $I_2$ is not a function of $I_4, I_6$ and $I_{10}$.

Let $D_{111} = \sqrt{3}$, $D_{112} = D_{113} = D_{122} = D_{123} = D_{222} = D_{223} = 0$.  Then $I_2 = 12, I_4 = 72, I_6 = I_{10} = 0$.
Let $D_{112} =  \sqrt{2}$, $D_{111} = D_{113} = D_{122} = D_{123} = D_{222} = D_{223} = 0$.  Then $I_2 = 12, I_4 = 48, I_6 = I_{10} = 0$.  We see that with respect to these two examples, the values of $I_2, I_6$ and $I_{10}$ keep invariant, but the value of $I_4$ is changed.   This shows that $I_4$ is not a function of $I_2, I_6$ and $I_{10}$.

Let $D_{111} = D_{112} = 1$, $D_{113} = D_{122} = D_{123} = D_{222} = D_{223} = 0$.  Then $I_2 = 10, I_4 = 44, I_6 = 16, I_{10} = 64$.
Let $D_{111} = D_{123} = 1$, $D_{112} = D_{113} = D_{122} = D_{222} = D_{223} = 0$.  Then $I_2 = 10, I_4 = 44, I_6 = 16, I_{10} = -64$.  We see that with respect to these two examples, the values of $I_2, I_4$ and $I_6$ keep invariant, but $I_{10}$ changes its sign.   This shows that $I_{10}$ is not a function of $I_2, I_4$ and $I_6$.

Let $f(t)=-43+\cos(6t)+84\sin(3t)$. Since $f(0)f(\frac{\pi}{6})=-42\cdot40<0$, we know $f(t)=0$ has a root in $(0,\frac{\pi}{6})$, which is denoted as $t_0$.
Let $D_{111}=1$, $D_{122}=-\frac{1}{2}+\frac{1}{2}\sin(t_0)$, $D_{123}=\frac{1}{2}\cos(t_0)$, $D_{223}=-2$, $D_{112} = D_{113} = D_{222} = 0$.
Then, $I_2 = 20, I_4 = 176, I_6=104-24\sin(3t_0), I_{10}=-16(-43+\cos(6t_0)+84\sin(3t_0))=0.$
On the other hand, let $D_{111} = D_{112} = D_{113} = D_{123} = 1$, $D_{122} = D_{222} = D_{223} = 0$. Then $I_2 = 20, I_4 = 176, I_6 = 128, I_{10} = 0$.
Clearly, since $t_0\in (0,\frac{\pi}{6})$, we have
$$104-24\sin(3t_0)< 104 <128.$$ Hence,  $I_6$  is not a function of $I_2, I_4$ and $I_{10}$.

Hence, none of $I_2, I_4, I_6$ and $I_{10}$ is a function of the other three invariants, i.e., $\{ I_2, I_4, I_6, I_{10} \}$ is also an irreducible function basis of a third order three-dimensional symmetric and traceless tensor $\D$.

Theorem \ref{prop:33} claims that any minimal integrity basis  of a third order three-dimensional symmetric and traceless tensor $\D$ is an irreducible function basis of that tensor.  Hence, Theorem \ref{prop:33} is more general.   The above direct proof for the Smith-Bao minimal integrity basis $\{ I_2, I_4, I_6, I_{10}\}$ just provides a support to Theorem \ref{prop:33}.

\section{The Four Invariants of the Basis are Algebraically Independent}

The next theorem claims that there is no syzygy relation among four invariants $J_2, J_4, J_6$ and $J_{10}$, where $\{J_2, J_4, J_6, J_{10} \}$ be an arbitrary minimal integrity basis of $\D$.

\begin{Theorem}
  Let $\{J_2, J_4, J_6, J_{10} \}$ be an arbitrary minimal integrity basis of a third order three-dimensional symmetric and traceless tensor $\D$.  Then there is no syzygy relation among four invariants $J_2, J_4, J_6$ and $J_{10}$.
\end{Theorem}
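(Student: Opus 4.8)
The plan is to recast "no syzygy" as the statement that $J_2,J_4,J_6,J_{10}$ are algebraically independent over $\mathbb R$, and then to determine the transcendence degree of the field they generate from both sides. A syzygy is precisely a nonzero polynomial $P$ with $P(J_2,J_4,J_6,J_{10})\equiv 0$ on $\operatorname{St}(3,3)$, so the theorem is equivalent to algebraic independence of the four invariants. The decisive structural fact, and the reason the statement holds for an \emph{arbitrary} minimal integrity basis, is that a minimal integrity basis generates the whole ring of polynomial invariants: $\mathbb R[\operatorname{St}(3,3)]^{\operatorname{O}(3)}=\mathbb R[J_2,J_4,J_6,J_{10}]$. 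Since $\mathbb R[\operatorname{St}(3,3)]$ is a polynomial ring, this invariant ring is an integral domain, and its fraction field is exactly $\mathbb R(J_2,J_4,J_6,J_{10})$. I would therefore reduce the theorem to proving that this field has transcendence degree $4$ over $\mathbb R$.

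For the upper bound, a field generated over $\mathbb R$ by four elements has transcendence degree at most $4$, with equality if and only if those four generators are algebraically independent; hence $\operatorname{trdeg}_{\mathbb R}\mathbb R(J_2,J_4,J_6,J_{10})\le 4$ for free. For the lower bound I would use that the transcendence degree of the fraction field equals the Krull dimension of the domain $\mathbb R[\operatorname{St}(3,3)]^{\operatorname{O}(3)}$, which is the dimension of the quotient variety $\operatorname{St}(3,3)/\operatorname{O}(3)$. Lemma~\ref{lem:dimension}, applied with $\operatorname{dim}\operatorname{St}(3,3)=7$ and $\operatorname{dim}\operatorname{O}(3)=3$, gives $\operatorname{dim}(\operatorname{St}(3,3)/\operatorname{O}(3))\ge 7-3=4$. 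Combining the two inequalities forces the transcendence degree to be exactly $4$, so $J_2,J_4,J_6,J_{10}$ are algebraically independent and no syzygy can exist.

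The step I expect to be the main obstacle is the lower bound, specifically the identification of the Lie-theoretic quotient dimension of Lemma~\ref{lem:dimension} (which computes $\operatorname{dim}V$ minus the generic orbit dimension) with the Krull dimension, equivalently the transcendence degree, of $\mathbb R[V]^{\operatorname{O}(3)}$. I would justify this exactly as in the proof of Theorem~\ref{thm:card}: separability (Lemma~\ref{lem:sep}) makes the fibre of the invariant map $\mathbb P\colon V\to\mathbb R^4$, $v\mapsto(J_2(v),J_4(v),J_6(v),J_{10}(v))$, over each point of its image coincide with the corresponding $\operatorname{O}(3)$-orbit, whose dimension is at most $\operatorname{dim}\operatorname{O}(3)=3$. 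The fibre-dimension theorem applied to the dominant morphism $V\to\overline{\mathbb P(V)}$ then yields $\operatorname{dim}\overline{\mathbb P(V)}=\operatorname{dim}V-\operatorname{dim}(\text{generic orbit})\ge 7-3=4$; as $\overline{\mathbb P(V)}\subseteq\mathbb R^4$, a syzygy would confine the image to a proper hypersurface of dimension at most $3$, a contradiction. The one point to handle with care is that the argument lives over $\mathbb R$ rather than $\mathbb C$, so one must ensure the dimension-theoretic inputs apply to the relevant real varieties; this is precisely the setting in which separability over a compact group is available via Lemma~\ref{lem:sep}, which is what keeps the fibres equal to the orbits and makes the count rigorous.
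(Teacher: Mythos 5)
Your proposal is correct in substance but follows a genuinely different route from the paper. The paper's proof is computational: it first rotates $\D$ into a normal form with only four free components $D_{111},D_{122},D_{123},D_{223}$ (placing a maximizer of $D_{ijk}x_ix_jx_k$ at $(1,0,0)^\top$ and killing $D_{112},D_{113},D_{222}$), writes out $I_2,I_4,I_6,I_{10}$ explicitly in these four variables, computes the $4\times4$ Jacobian determinant and checks it is a nonzero polynomial, and then runs a chain-rule argument to conclude that any syzygy $p$ must be locally constant, hence the zero polynomial; finally it transfers the result from the Smith--Bao basis $\{I_2,I_4,I_6,I_{10}\}$ to an arbitrary minimal basis $\{J_2,J_4,J_6,J_{10}\}$ via the triangular polynomial substitutions $J_2=J_2(I_2)$, $J_4=J_4(I_2,I_4)$, etc. You instead run a dimension count: a syzygy would confine the image of $v\mapsto(J_2(v),\dots,J_{10}(v))$ to a hypersurface of dimension at most $3$ in $\mathbb R^4$, while separability (Lemma~\ref{lem:sep}) forces the fibres to be single $\operatorname{O}(3)$-orbits of dimension at most $3$, so the image must have dimension at least $7-3=4$ --- essentially re-running the proof of Theorem~\ref{thm:card} with the sharper conclusion of algebraic independence rather than mere cardinality. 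Your approach buys uniformity (it treats an arbitrary minimal integrity basis directly, with no reduction to the Smith--Bao basis) and avoids the heavy symbolic computation, and it generalizes to other tensor spaces; what it costs is reliance on fibre-dimension theory over $\mathbb R$, which you rightly flag as the delicate point --- the clean statement to invoke is the semialgebraic dimension formula $\dim A\le\dim f(A)+\max_b\dim f^{-1}(b)$ for polynomial maps of real varieties, since the complex fibre-dimension theorem cited in the paper's own proof of Theorem~\ref{thm:card} does not literally apply to real points. (The Krull-dimension/transcendence-degree framing in your first two paragraphs is a harmless detour; the direct fibre argument in your last paragraph is what actually closes the proof.) The paper's explicit Jacobian, by contrast, supplies a concrete nonvanishing certificate and stays entirely within elementary calculus and polynomial algebra.
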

\begin{proof}   We first show that there is no syzygy relation among four invariants $I_2, I_4, I_6$ and $I_{10}$, where $\{I_2, I_4, I_6, I_{10} \}$ is the Smith-Bao minimal integrity basis of $\D$.

For a given third order three-dimensional symmetric and traceless tensor $\D$, we define
\begin{equation*}
  g(\x) := D_{ijk}x_ix_jx_k,
\end{equation*}
where $\x=(x_1,x_2,x_3)^\top$. Using seven independent elements of the tensor $\D$:
\begin{equation*}
  D_{111}, D_{112}, D_{113}, D_{122}, D_{123}, D_{222}, \text{ and } D_{223},
\end{equation*}
the homogeneous polynomial $g(\x)$ could be rewritten as
\begin{eqnarray*}
  g(\x) &=& D_{111} x_1^3+3 D_{112} x_1^2x_2 +3 D_{113} x_1^2x_3 +3 D_{122} x_1x_2^2 +6 D_{123} x_1 x_2 x_3+3
   \left(-D_{111}-D_{122}\right) x_1x_3^2 \\&&{}
   +D_{222} x_2^3+3 D_{223} x_2^2 x_3+3 \left(-D_{112}-D_{222}\right) x_2 x_3^2 +\left(-D_{113}-D_{223}\right) x_3^3.
\end{eqnarray*}
On the unit sphere $\{\x:x_ix_i=1\}$, the homogeneous polynomial $g(\x)$ has a maximizer. By rotating coordinates, we could place one maximizer at a point $(1,0,0)^\top$. Hence, the maximizer $\x=(1,0,0)$ satisfies the following system
\begin{equation*}\footnotesize
\left\{\begin{aligned}
  3 D_{111} x_1^2+6 D_{112} x_1 x_2+6 D_{113} x_1 x_3+3 D_{122} x_2^2+6 D_{123} x_2 x_3+3
   \left(-D_{111}-D_{122}\right) x_3^2 &= \lambda x_1, \\
  3 D_{112} x_1^2+6 D_{122} x_1 x_2+6 D_{123} x_1 x_3+3 D_{222} x_2^2+6 D_{223} x_2 x_3+3
   \left(-D_{112}-D_{222}\right) x_3^2 &= \lambda x_2, \\
  3 D_{113} x_1^2+6 D_{123} x_1 x_2+6 \left(-D_{111}-D_{122}\right) x_1 x_3+3 D_{223}
   x_2^2+6 \left(-D_{112}-D_{222}\right) x_2 x_3+3 \left(-D_{113}-D_{223}\right) x_3^2 &= \lambda x_3.
\end{aligned}\right.
\end{equation*}
Then, we get
\begin{equation*}
  D_{112}=D_{113}=0, \qquad D_{111}\ge 0,
\end{equation*}
and
\begin{eqnarray*}
  g(\x) &=& D_{111} x_1^3+3 D_{122} x_1x_2^2 +6 D_{123} x_1 x_2 x_3+3
   \left(-D_{111}-D_{122}\right) x_1x_3^2 \\&&{}
   +D_{222} x_2^3+3 D_{223} x_2^2 x_3-3D_{222}x_2 x_3^2 -D_{223}x_3^3.
\end{eqnarray*}
Since $g(0,-x_2,-x_3)=-g(0,x_2,x_3)$, $g(\x)$ must have a zero point in the circle $\{(0,x_2,x_3)^\top:x_2^2+x_3^2=1\}$. We may further rotate coordinates such that $g(0,1,0)=0$. Hence, we have
\begin{equation*}
  D_{222}=0.
\end{equation*}

In the new coordinate, the tensor $\D$ has four independent elements (with slightly abusing of notations)
\begin{equation*}
  D_{111}\ge0,D_{122},D_{123}, \text{ and }D_{223}.
\end{equation*}

Four isotropic invariants $I_2,I_4,I_6,I_{10}$ are indeed
\begin{eqnarray*}
  I_2 &=& 4 D_{111}^2+6 D_{122} D_{111}+6 D_{122}^2+6 D_{123}^2+4 D_{223}^2, \\
  I_4 &=& 2 (4 D_{111}^4+12 D_{122} D_{111}^3+(18 D_{122}^2+12 D_{123}^2+5
   D_{223}^2) D_{111}^2+12 D_{122} (D_{122}^2+D_{123}^2 \\&&{} +D_{223}^2)
   D_{111}+6 D_{122}^4+6 D_{123}^4+4 D_{223}^4+12 D_{123}^2 D_{223}^2+12 D_{122}^2
   (D_{123}^2+D_{223}^2)), \\
  I_6 &=& 4 (4 (D_{122}^2+D_{223}^2) D_{111}^4+8 D_{122}
   (D_{122}^2+D_{123}^2+3 D_{223}^2) D_{111}^3+(4 D_{122}^4+(8
   D_{123}^2 \\&&{} +37 D_{223}^2) D_{122}^2+4 D_{123}^4+D_{223}^4-3 D_{123}^2
   D_{223}^2) D_{111}^2+4 D_{122} (5 D_{122}^2-7 D_{123}^2) D_{223}^2
   D_{111} \\&&{} +4 (D_{122}^2+D_{123}^2){}^2 D_{223}^2),
\end{eqnarray*}
and
\begin{eqnarray*}
  I_{10} &=& -8 (8 (D_{122}^3-3 D_{122} D_{223}^2) D_{111}^7+4 (6
   D_{122}^4+(6 D_{123}^2-39 D_{223}^2) D_{122}^2-5 D_{223}^4 \\&&{} -6 D_{123}^2
   D_{223}^2) D_{111}^6+6 D_{122} (4 D_{122}^4+(8 D_{123}^2-73
   D_{223}^2) D_{122}^2+4 D_{123}^4-21 D_{223}^4 \\&&{} -8 D_{123}^2 D_{223}^2)
   D_{111}^5+(8 D_{122}^6+24 (D_{123}^2-26 D_{223}^2) D_{122}^4+3
   (8 D_{123}^4-28 D_{223}^2 D_{123}^2 \\&&{} -109 D_{223}^4) D_{122}^2+8
   D_{123}^6+D_{223}^6+72 D_{123}^2 D_{223}^4+84 D_{123}^4 D_{223}^2)
   D_{111}^4 \\&&{} -2 D_{122} D_{223}^2 (231 D_{122}^4+2 (69 D_{123}^2+101
   D_{223}^2) D_{122}^2-45 D_{123}^4-78 D_{123}^2 D_{223}^2) D_{111}^3 \\&&{} -6
   D_{223}^2 (28 D_{122}^6+(32 D_{123}^2+41 D_{223}^2) D_{122}^4+2
   (6 D_{123}^4-11 D_{123}^2 D_{223}^2) D_{122}^2+8 D_{123}^6 \\&&{} +9 D_{123}^4
   D_{223}^2) D_{111}^2-24 D_{122} D_{223}^2 (D_{122}^6-(D_{123}^2-3
   D_{223}^2) D_{122}^4-(5 D_{123}^4+14 D_{223}^2 D_{123}^2)
   D_{122}^2 \\&&{} -D_{123}^4 (3 D_{123}^2+D_{223}^2)) D_{111}+8
   (-D_{122}^6+15 D_{123}^2 D_{122}^4-15 D_{123}^4 D_{122}^2+D_{123}^6)
   D_{223}^4).
\end{eqnarray*}
We now consider the Jacobian of $\{I_2,I_4,I_6,I_{10}\}$ in variables $\{D_{111},D_{122},D_{123},D_{223}\}$:
\begin{equation*}
  \mathrm{Jac}=\left(\begin{array}{cccc}
    \frac{\partial I_2}{\partial D_{111}} & \frac{\partial I_2}{\partial D_{122}} & \frac{\partial I_2}{\partial D_{123}} & \frac{\partial I_2}{\partial D_{223}} \\
    \frac{\partial I_4}{\partial D_{111}} & \frac{\partial I_4}{\partial D_{122}} & \frac{\partial I_4}{\partial D_{123}} & \frac{\partial I_4}{\partial D_{223}} \\
    \frac{\partial I_6}{\partial D_{111}} & \frac{\partial I_6}{\partial D_{122}} & \frac{\partial I_6}{\partial D_{123}} & \frac{\partial I_6}{\partial D_{223}} \\
    \frac{\partial I_{10}}{\partial D_{111}} & \frac{\partial I_{10}}{\partial D_{122}} & \frac{\partial I_{10}}{\partial D_{123}} & \frac{\partial I_{10}}{\partial D_{223}}
  \end{array}\right).
\end{equation*}
By some calculations, the determinant of this Jacobian is
\begin{eqnarray*}
  \det(\mathrm{Jac}) &=& 27648 D_{123} (9 D_{111}^4+24 D_{122} D_{111}^3-24
   (D_{122}^2+D_{123}^2) D_{111}^2-32 D_{122} (3
   D_{122}^2+D_{123}^2) D_{111} \\&&{} +16 (-3 D_{122}^4-2 D_{123}^2
   D_{122}^2+D_{123}^4)) D_{223}^3 (16 (3
   D_{122}^2-D_{223}^2) D_{111}^8+32 (D_{122}^3 \\&&{} +3 D_{123}^2 D_{122})
   D_{111}^7-8 (18 D_{122}^4+3 (4 D_{123}^2+3 D_{223}^2) D_{122}^2-6
   D_{123}^4-5 D_{223}^4 \\&&{} -18 D_{123}^2 D_{223}^2) D_{111}^6-24 D_{122} (8
   D_{122}^4+(16 D_{123}^2-D_{223}^2) D_{122}^2+8 D_{123}^4+D_{223}^4 \\&&{} +3
   D_{123}^2 D_{223}^2) D_{111}^5-(64 D_{122}^6+48 (4 D_{123}^2-7
   D_{223}^2) D_{122}^4+3 (64 D_{123}^4+96 D_{223}^2 D_{123}^2 \\&&{} +7
   D_{223}^4) D_{122}^2+64 D_{123}^6+25 D_{223}^6+132 D_{123}^2 D_{223}^4+240
   D_{123}^4 D_{223}^2) D_{111}^4 \\&&{} +6 D_{122} D_{223}^2 (48 D_{122}^4+4
   (8 D_{123}^2-3 D_{223}^2) D_{122}^2-16 D_{123}^4+5 D_{223}^4-8 D_{123}^2
   D_{223}^2) D_{111}^3 \\&&{} +4 D_{223}^2 (16 D_{122}^6+6 (8 D_{123}^2-7
   D_{223}^2) D_{122}^4+(48 D_{123}^4+78 D_{223}^2 D_{123}^2+9
   D_{223}^4) D_{122}^2 \\&&{} +16 D_{123}^6+3 D_{123}^2 D_{223}^4+12 D_{123}^4
   D_{223}^2) D_{111}^2-8 D_{122} (D_{122}^2-3 D_{123}^2) D_{223}^4
   (12 D_{122}^2 \\&&{} -D_{223}^2) D_{111}-16 (D_{122}^3-3 D_{122}
   D_{123}^2){}^2 D_{223}^4),
\end{eqnarray*}
which is a polynomial in variables $\{D_{111},D_{122},D_{123},D_{223}\}$.
Clearly, the hypersurface $\det(\mathrm{Jac})= 0$ divides the space $\REAL^4$ of $(D_{111},D_{122},D_{123},D_{223})$ into several regions. We consider one of them.

Let $\Omega\subseteq\{(D_{111},D_{122},D_{123},D_{223})^\top : \det(\mathrm{Jac})\neq 0\}$ be a maximal connected open set, where ``maximal'' means that $\Omega$ can not be contained in another connected open set such that $\det(\mathrm{Jac})\neq 0$. As a polynomial in $D_{111},D_{122},D_{123}$ and $D_{223}$, $\det(\mathrm{Jac})\neq 0$ holds for all points in $\Omega$. Then, we process by contradiction. Suppose that there exists a syzygy relation among isotropic invariants $I_2,I_4,I_6,$ and $I_{10}$, which is denoted as a polynomial equation $$p(I_2,I_4,I_6,I_{10})=0.$$ Clearly, $p$ is also a polynomial in variables $D_{111},D_{122},D_{123}$ and $D_{223}$. By chain rule, we have
\begin{equation}\label{Chain-rule}
  \left(\begin{array}{c}
    \frac{\partial p}{\partial D_{111}} \\ \frac{\partial p}{\partial D_{122}} \\ \frac{\partial p}{\partial D_{123}} \\ \frac{\partial p}{\partial D_{223}}
  \end{array}\right) = \frac{\partial p}{\partial I_2}
  \left(\begin{array}{c}
    \frac{\partial I_2}{\partial D_{111}} \\ \frac{\partial I_2}{\partial D_{122}} \\ \frac{\partial I_2}{\partial D_{123}} \\ \frac{\partial I_2}{\partial D_{223}}
  \end{array}\right) + \frac{\partial p}{\partial I_4}
  \left(\begin{array}{c}
    \frac{\partial I_4}{\partial D_{111}} \\ \frac{\partial I_4}{\partial D_{122}} \\ \frac{\partial I_4}{\partial D_{123}} \\ \frac{\partial I_4}{\partial D_{223}}
  \end{array}\right) + \frac{\partial p}{\partial I_6}
  \left(\begin{array}{c}
    \frac{\partial I_6}{\partial D_{111}} \\ \frac{\partial I_6}{\partial D_{122}} \\ \frac{\partial I_6}{\partial D_{123}} \\ \frac{\partial I_6}{\partial D_{223}}
  \end{array}\right) + \frac{\partial p}{\partial I_{10}}
  \left(\begin{array}{c}
    \frac{\partial I_{10}}{\partial D_{111}} \\ \frac{\partial I_{10}}{\partial D_{122}} \\ \frac{\partial I_{10}}{\partial D_{123}} \\ \frac{\partial I_{10}}{\partial D_{223}}
  \end{array}\right)=0.
\end{equation}
Clearly, $\frac{\partial p}{\partial I_2}, \frac{\partial p}{\partial I_4}, \frac{\partial p}{\partial I_6},$ and $ \frac{\partial p}{\partial I_{10}}$ are polynomials in variables $D_{111},D_{122},D_{123}$ and $D_{223}$. Since $\det(\mathrm{Jac})\neq 0$ for all points in $\Omega$, we know that four one-way arrays in the middle of \eqref{Chain-rule} are linear independent. Hence, we have
\begin{equation*}
  \frac{\partial p}{\partial I_2}=\frac{\partial p}{\partial I_4}=\frac{\partial p}{\partial I_6}=\frac{\partial p}{\partial I_{10}}=0.
\end{equation*}
Therefore, the polynomial $p$ is a constant function in $\Omega$ whose value is zero.

By a similar discussion, we obtain that $p$ is a constant function in every region. Since $p$ is a polynomial, we get that $p$ must be a zero function. This contradicts the assumption that there exists a syzygy relation among isotropic invariants $I_2,I_4,I_6,$ and $I_{10}$.

We now show that there is no syzygy relation among four invariants $J_2, J_4, J_6$ and $J_{10}$, where $\{J_2, J_4, J_6, J_{10} \}$ is an arbitrary minimal integrity basis of $\D$.  Suppose that there exists a syzygy relation among isotropic invariants $J_2,J_4,J_6,$ and $J_{10}$, which is denoted as a polynomial equation $$q(J_2,J_4,J_6,J_{10})=0.$$  Since $\{I_2, I_4, I_6, I_{10} \}$ is an integrity basis of $\D$, we may represent $J_2, J_4, J_6$ and $J_{10}$ as polynomials of $I_2, I_4, I_6$ and $I_{10}$.   Note that in this way $J_2$ should be a polynomial of $I_2$, $J_4$ should be a polynomial of $I_2$ and $I_4$, etc.   Thus, we have polynomial function relations:
$$J_2 =J_2(I_2),$$
$$J_4 =J_4(I_2,I_4),$$
$$J_6 =J_6(I_2,I_4,I_6),$$
$$J_{10} =J_{10}(I_2,I_4,I_6,I_{10}).$$
Then we have a syzygy relation among isotropic invariants $I_2,I_4,I_6,$ and $I_{10}$ as follows:
$$q(J_2(I_2),J_4(I_2,I_4),J_6(I_2,I_4,I_6),J_{10}(I_2,I_4,I_6,I_{10}))=0.$$
This forms a contradiction.   Hence, there is no syzygy relation among four invariants $J_2, J_4, J_6$ and $J_{10}$.
\end{proof}

\textbf{Remark.} We note that the conclusion of algebraic independence among invariants forming an irreducible function basis of a tensor is not trivial. There exist syzygies in invariants forming an irreducible function basis of several tensors. For example, Chen, Liu, Qi, Zheng and Zou \supercite{CLQZZ-18} studied third order three-dimensional symmetric tensors and gave three syzygies among the eleven invariants of an irreducible function basis of isotropic invariants the symmetric tensors.


\section{Integrity and Function Bases of  Third Order Tensors}

In three-dimensional physical spaces, there are important third order tensors such as third order symmetric and traceless tensors, third order symmetric tensors, the Hall tensor, the piezoelectric tensor, etc.

In 1997, Smith and Bao\supercite{SB-97} presented a minimal integrity basis of four isotropic invariants for a third order three-dimensional symmetric and traceless tensor. Olive and Auffray\supercite{OA-14} constructed a minimal integrity basis with thirteen isotropic invariants for a third order symmetric tensor in 2014.  This year, Chen, Liu, Qi, Zheng and Zou\supercite{CLQZZ-18} showed that eleven isotropic invariants among the Olive-Auffray minimal integrity basis of a third order symmetric tensor form an irreducible function basis of that tensor. Also in this year, a ten invariant minimal integrity basis, which is also an irreducible function basis of the Hall tensor,  was presented by Liu, Ding, Qi and Zou\supercite{LDQZ-18}. For the piezoelectric tensor, in 2014, Olive\supercite{Ol-14} gave $495$ hemitropic invariants and claimed that these hemitropic invariants form an hemitropic integrity basis. Moreover, Olive\supercite{Ol-14} showed a set of $30,878$ isotropic invariants which form an integrity basis of isotropic invariants of the piezoelectric tensor.  Some further efforts are needed to find a function basis of the piezoelectric tensor with the cardinality smaller than the cardinality of the integrity basis given in \supercite{Ol-14}.


\end{document}